\documentclass[letterpaper, 10 pt, conference]{ieeeconf}
\IEEEoverridecommandlockouts
\usepackage{placeins}
\usepackage{cite}
\usepackage{amsmath,amssymb,amsfonts}
\usepackage{algorithm}
\usepackage{algpseudocode}
\usepackage{graphicx}
\usepackage{subcaption}
\usepackage{textcomp}
\usepackage{xcolor}
\usepackage{hyperref}
\usepackage{float}
\usepackage{makecell}
\usepackage{mathtools}
\usepackage{tikz}

\newtheorem{theorem}{Theorem}[section]
\newtheorem{remark}{Remark}

\newtheorem{assumption}{Assumption}
\newtheorem{proposition}{Proposition}
\newtheorem{lemma}{Lemma}

\newtheorem{problem}{Problem}
\newcommand{\QEDn}{\hfill $\blacksquare$}

\allowdisplaybreaks

\usepackage[normalem]{ulem}
\title{\LARGE \bf 
Finite-Sample Conformal Coverage Recovery via Fusion under Degraded Local Guarantees in Occupancy Map Estimation}
\author{ Ritvik Mahajan, Aneesh Raghavan, Karl Henrik Johansson
\thanks{This work was partly supported by the Knut and Alice Wallenberg
Foundation (Wallenberg Scholar grant), the Swedish Research Council (Distinguished Professor grant 2017-01078), and the Swedish Foundation for Strategic
Research (SUCCESS FUS21-0026).}%
\thanks{The authors are with the Department of Decision and Control Systems, School of Electrical Engineering and Computer Science, KTH Royal Institute of Technology,  and also with Digital Futures, Stockholm, Sweden.{\tt\small \{ritvikm,aneesh,kallej\}@kth.se}}%
}

\begin{document}
\maketitle
\thispagestyle{empty}
\pagestyle{empty}

\begin{abstract}
Accurate and reliable environmental mapping is a fundamental requirement for multi-robot autonomy. While continuous mapping techniques like Gaussian Process Occupancy Mapping (GPOM) provide rich spatial correlation and uncertainty estimates, they lack formal, finite-sample guarantees on their predictive reliability. Conformal prediction can equip each robot's local map with a distribution-free coverage guarantee, but this local guarantee degrades in practice: temporal correlation along a robot's trajectory breaks the exchangeability on which conformal calibration relies, and each robot observes only a spatially limited, non-uniform portion of the environment. Taking these degraded per-agent guarantees as given, we develop a distributed fusion algorithm that recovers the desired coverage across the team. Robots exchange only lightweight scalar e-values with their neighbors, and a receiver fuses them using a per-neighborhood miscoverage budget and an uncertainty-attenuated fusion operator. We prove that the fused set-valued map recovers the target user-specified coverage level regardless of the communication graph topology or the underlying sensor noise distribution. However, a drawback is that wherever the fused evidence is insufficient, the map declines to commit and returns both labels (free and occupied), leaving a significant fraction of the domain unclassified rather than thresholded into a single decision. Simulated multi-agent mapping experiments demonstrate that the fused predictor reliably meets its theoretical coverage bounds, and illustrate that denser communication topologies significantly enhance map efficiency by shrinking this unclassified fraction.
\end{abstract}

\section{Introduction}\label{sec:introduction}

Having an accurate map of the environment is fundamental to mobile robot autonomy. A likelihood map assigns each location a continuous occupancy score, a probability or log-odds value, whereas the occupancy map is the binary free/occupied classification obtained by applying a thresholding policy to that likelihood map. Mapping algorithms overwhelmingly estimate the likelihood map, leaving the occupancy map, the object actually consumed by planning and safety layers, to an unguaranteed thresholding step.
The classical Occupancy Grid Map (OGM) framework discretizes the workspace and treats cell occupancies independently \cite{elfes1989, thrun2005probabilistic}. 
While computationally tractable, this assumption discards spatial correlations and fixes map resolution.
Continuous alternatives, such as Gaussian Process Occupancy Mapping (GPOM) \cite{ocallaghan2012gpom}, overcome these limitations by providing spatial correlation and uncertainty estimates. 
Recent scalable extensions include incremental updates \cite{jadidi2018gpom} and Hilbert Maps \cite{ramos2016hilbert, senanayake2017bhm, senanayake2018continuous}. 
However, their uncertainty estimates rely heavily on heuristics or exact hyperparameter tuning, lacking formal finite-sample guarantees on map reliability.

Tasks such as exploration and search-and-rescue are better served by multi-robot teams covering larger areas efficiently, which requires systematic map-merging methods. 
Existing continuous fusion approaches handle intermittent connectivity and communication constraints using Gaussian Mixture Models \cite{dong2022mrgmmapping}, distributed sparse GPs \cite{di2022distributed}, or distributed factor graphs \cite{mcgann2024imesa}. 
Despite this progress, existing methods fail to provide formal, distribution-free guarantees on the fraction of map queries answered correctly. 
This is a critical gap for safety-critical deployment because heterogeneous exploration naturally results in non-uniform data quality across the environment.

Conformal Prediction (CP) is a distribution-free framework for constructing prediction sets with provable finite-sample coverage guarantees \cite{vovk2005alrw, papadopoulos2002icm}. 
Recent advances have extended CP to handle online data streams \cite{gibbs2021aci}, label noise \cite{einbinder2024label}, safe control planning \cite{lindemann2024survey}, and distributed computation via message passing \cite{wen2025distributed}. Specifically in mapping and navigation, recent work has begun leveraging CP to quantify uncertainty in 3D semantic occupancy in autonomous driving \cite{su2025alphaocc}, for semantic navigation in unknown environments under perceptual uncertainty \cite{sundarsingh2026safe}, and to extract probabilistic safety bounds for UGV trajectory planning \cite{wei2024ugv}.
These properties make CP a natural tool for certifying map predictions. 
However, providing formal finite-sample guarantees for distributed, multi-agent occupancy fusion under correlated, non-i.i.d.\ data remains an open challenge, which this work addresses. The proposed approach is agnostic to the choice of likelihood-map estimator, and instead targets the occupancy map it induces: each agent equips its own thresholded map with a conformal coverage guarantee, but since spatially and temporally correlated data degrade this per-agent guarantee below the target level, we pose the problem as retrieving the desired coverage by fusing the degraded local guarantees across the team, without exchanging raw data.

\subsection{Motivating Experiments}\label{subsec:motivation}

To make the challenge concrete, we run single- and multi-agent studies in the simulation environment described in Section~\ref{sec:simulation} (with full setup and noise levels there). As external baselines, we map using OGM (grid-based) and IGPOM (GP-based) with a single robot traversing the entire workspace ($989$ scans) and report ROC-AUC against the known ground truth. Two limitations emerge. First, sensing noise degrades the grid-based OGM (ROC-AUC $0.86\!\to\!0.82$ as the LiDAR range-noise standard deviation $\sigma_r$ grows from $0$ to $0.1$\,m, Fig.~\ref{fig:mot_noise}); the variance-aware IGPOM resists noise but is far costlier ($\sim\!140$\,s vs.\ $\sim\!8$\,s per map). Second, collecting more data does not close this gap: as the retained-scan fraction grows from $0.1$ to $1$, IGPOM's ROC-AUC stays within $[0.87,0.93]$ and never approaches $1$, while its GP training time rises roughly tenfold ($14\!\to\!134$\,s, Fig.~\ref{fig:mot_step}). The achievable single-agent quality is thus bounded away from perfect and cannot be bought with more measurements.

Collaboration, in contrast, does help because a single robot observes only part of the workspace. Fig.~\ref{fig:mot_collab} reports ROC-AUC over the whole workspace as more agents' local GPs are fused (each agent's bare-Mat\'ern GP, the pre-fusion local estimator of Section~\ref{subsec:local_est}, combined by a gate-masked probability average). A lone agent, informative only on its own ${\sim}60\%$ of the map, attains $0.80$; two complementary agents already reach $0.93$ (e.g.\ robots~4 and~5 individually score $0.79$ each but $0.92$ fused), and the five-agent team reaches $0.95$.

Yet all of these methods share a deeper limitation: each outputs a probabilistic map that must be thresholded heuristically to obtain a classified (occupied/free) map, and that threshold carries no guarantee on the fraction of misclassified queries. Fig.~\ref{fig:mot_single} shows the diffuse occupancy-probability field of a single agent's pre-fusion local estimator, the bare-Mat\'{e}rn GP of Section~\ref{subsec:local_est}, sharpening around observed structure and degrading under sparse, noisy data. Choosing the classification threshold is itself ill-posed: Fig.~\ref{fig:mot_f1} sweeps the threshold $\tau$ on the occupied-class F1 of one such map (full data, $\sigma_r=0.1$\,m), whose peak is $0.38$ at $\tau\approx0.6$ and collapses on either side; and because the occupied class is rare (${\sim}5\%$), accuracy is uninformative (a trivial all-free map scores ${\sim}95\%$). No principled, condition-independent threshold exists, and none of these maps bounds its misclassification rate. This motivates our approach: rather than thresholding a probability field into a single label per cell, we fuse such agents into a set-valued map that, at each query, returns the set of occupancy labels it deems admissible. This set carries a finite-sample guarantee at a user-chosen level $\alpha$: across queries, the true occupancy label is excluded from the returned set at most a fraction $\alpha$ of the time (equivalently, the fused map covers the true label with probability at least $1-\alpha$), regardless of the sensor noise or the amount of data.
\begin{figure*}[t]
  \centering
  \begin{subfigure}{0.32\textwidth}
    \includegraphics[width=\linewidth]{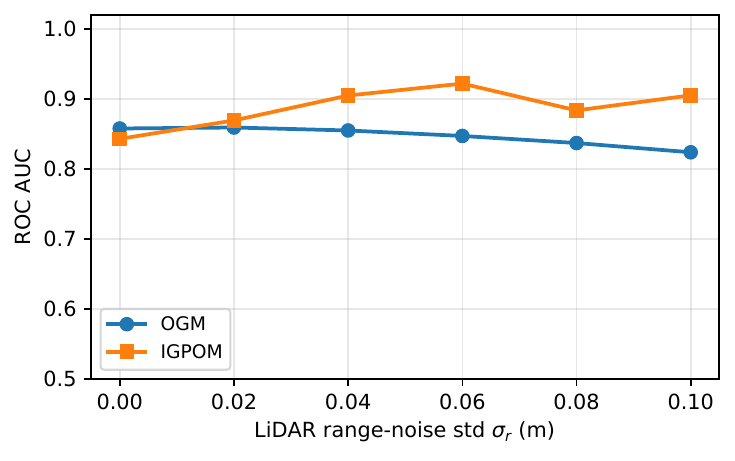}
    \caption{noise degrades quality}\label{fig:mot_noise}
  \end{subfigure}\hfill
  \begin{subfigure}{0.32\textwidth}
    \includegraphics[width=\linewidth]{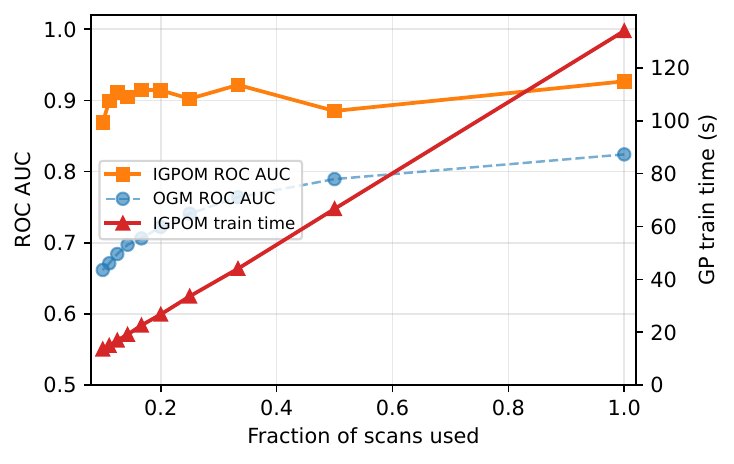}
    \caption{more data leads to more compute}\label{fig:mot_step}
  \end{subfigure}\hfill
  \begin{subfigure}{0.32\textwidth}
    \includegraphics[width=\linewidth]{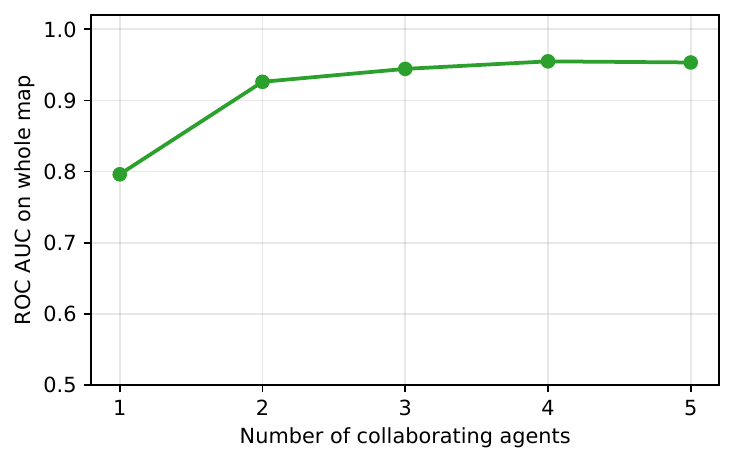}
    \caption{collaboration extends coverage}\label{fig:mot_collab}
  \end{subfigure}
  \caption{Motivating experiments (ROC-AUC vs.\ ground truth). \textbf{(a)} A single robot over the whole workspace: the grid-based OGM degrades as the LiDAR range-noise $\sigma_r$ grows, while the variance-aware IGPOM resists noise but costs ${\sim}18\times$ more and saturates. \textbf{(b)} The same robot, varying the retained-scan fraction: IGPOM's ROC-AUC stays flat while GP training time rises ${\sim}10\times$. \textbf{(c)} Multi-agent collaboration: fusing more agents' local GPs raises the whole-workspace ROC-AUC from $0.80$ (one agent) to $0.95$ (five agents), since collaboration maps regions that a single robot never observes.}
  \label{fig:mot_quant}
\end{figure*}

\begin{figure}[t]
  \centering
  \begin{subfigure}[t]{0.32\columnwidth}
    \includegraphics[width=\linewidth]{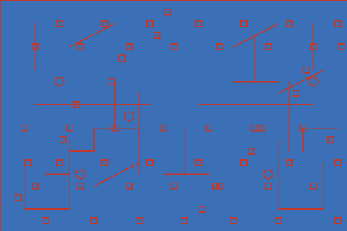}
    \caption{ground truth}\label{fig:mot_single_gt}
  \end{subfigure}\hfill
  \begin{subfigure}[t]{0.32\columnwidth}
    \includegraphics[width=\linewidth]{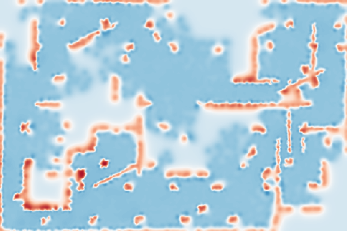}
    \caption{full data, low noise}\label{fig:mot_single_nom}
  \end{subfigure}\hfill
  \begin{subfigure}[t]{0.32\columnwidth}
    \includegraphics[width=\linewidth]{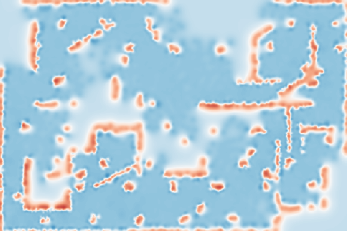}
    \caption{sparse + high noise}\label{fig:mot_single_deg}
  \end{subfigure}
  \caption{Occupancy probability of a single agent's pre-fusion local estimator over the whole workspace, beside the ground truth. In (b),(c), the color encodes $P(\mathrm{occupied})$ from $0$ (blue) to $1$ (red); in (a), blue/red mark free/occupied ground truth. The map is a diffuse probability field, sharper around observed structure and degrading under sparse/noisy data.}
  \label{fig:mot_single}
\end{figure}

\begin{figure}[t]
  \centering
  \includegraphics[width=\columnwidth]{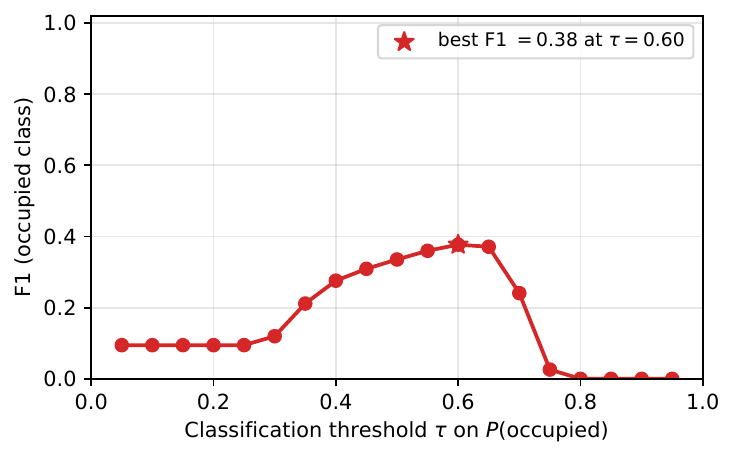}
  \caption{Occupied-class F1 versus classification threshold $\tau$ for a single-agent map (full data, $\sigma_r=0.1$\,m). The best F1 is $0.38$, at $\tau\approx0.6$, and collapses on either side.}
  \label{fig:mot_f1}
  \vspace{-0.7cm}
\end{figure}

\subsection{Problem Description}
Multiple agents are gathering data. Using the collected data in a static environment, the objective is to estimate an occupancy map without exchanging data. There are various algorithms for estimating the likelihood map, such as OGM, GPOM, and Hilbert maps. However, no algorithm formally guarantees the accuracy of the occupancy map obtained by applying a given thresholding policy to the likelihood map. Given the data and any choice of likelihood estimation algorithm, we assume that there exist thresholds that achieve coverage guarantees, even when the data are spatially and temporally correlated. We consider the problem of recovering coverage guarantees from degraded local guarantees. Formally, we cast distributed occupancy mapping as a coverage-retrieval problem (Problem~\ref{prob:main}): each agent applies split conformal prediction to its local likelihood map and returns a set-valued occupancy prediction whose miscoverage can exceed the target level $\alpha$, since its trajectory data are temporally correlated and cover only part of the workspace, and the task is to design a per-agent fusion rule over the communication neighborhood that restores level $\alpha$ while agents share only scalar e-values on a common query set and no raw measurements.

\subsection{Contributions}\label{subsec:contributions}
To solve the coverage-retrieval problem posed above (Problem~\ref{prob:main}), we give an algorithm (Algorithm~\ref{alg:dcpom}) that separates independent local fitting and calibration from a single round of e-value broadcast followed by a weighted-average fusion at each receiver. The fusion rule combines a per-neighborhood budget $\beta=\alpha/d$ that splits the target among the $d$ agents, a per-agent threshold recalibrated to that budget on a held-out fold, a lifted e-value carrying an observation gate and an uncertainty attenuation, and fixed weights under which fusion is a simple average; the budget lets a single confident agent decide a query in the region it has observed, and we prove (Theorem~\ref{thm:fusion}) that the fused set attains coverage $1-\alpha$ at every agent for any graph and any sensor-noise distribution under local stationarity and temporal mixing rather than i.i.d.\ data, with the attenuation restoring the target level when a local guarantee fails (Proposition~\ref{prop:recovery}). Multi-agent simulations confirm that the bound holds across $\alpha$ and both topologies and that denser graphs leave a smaller unclassified fraction.

The outline of the paper is as follows. The problem formulation is presented in Section \ref{sec:problem formulation} . The Algorithm in presented in Section \ref{sec:algorithm} while its properties are studied in Section \ref{sec:analysis}. The simulation results are presented in Section \ref{sec:simulation} and we conclude in Section  \ref{sec:conclusion}.
\section{Problem Formulation}\label{sec:problem formulation}
In this section we describe the problem setup in subsection \ref{Problem Setup}, the local estimation procedure in \ref{subsec:local_est} and the main problem in subsection \ref{Problem Definition}. All random variables are defined on a common probability space $(\Omega,\mathcal{F},\mathbb{P})$, and the law of a random variable $X$ is denoted $\mathcal{L}(X)$.

\subsection{Problem Setup}\label{Problem Setup}

We consider $N$ robots operating collaboratively. The robots communicate over an undirected graph $\mathcal{G}=(\mathcal{V},\mathcal{E})$, where $\mathcal{V}=\{1,\dots,N\}$ is the set of robots and $(i,j)\in\mathcal{E}$ if robots $i$ and $j$ can exchange messages.  Let $\mathcal{N}(i):=\{j:(i,j)\in\mathcal{E}\}$ denote the neighborhood of robot $i$. Each robot communicates only with its immediate neighbors; there is no central coordinator.The robots operate in a bounded region $E\subset\mathbb{R}^2$ with a deterministic occupancy function $O:E\to\{-1,+1\}$ that is unknown and must be inferred from noisy measurements. Each robot $i$ moves along a trajectory $(e_{0:n}^i,R_{0:n}^i)\in(\mathbb{R}^2\times SO(2))^{n+1}$ and collects 2D LiDAR scans. A scan at time $n$ consists of $L_n^i$ range-bearing returns:
\begin{equation}
Z_n^i = \bigl\{(r_n^{i,\ell},\varphi_n^{i,\ell})\bigr\}_{\ell=1}^{L_n^i},
\end{equation}
where $r_n^{i,\ell}\ge 0$ is the range and $\varphi_n^{i,\ell}$ is the bearing of beam $\ell$ in the robot's local frame.
Both odometry and sensor returns are corrupted by additive Gaussian noise:
\begin{align}
  \tilde{r}_n^{i,\ell} &= r_n^{i,\ell}+\epsilon_{r,n}^{i,\ell}, &
  \tilde{\varphi}_n^{i,\ell} &= \varphi_n^{i,\ell}+\epsilon_{\varphi,n}^{i,\ell},
  \label{eq:noise_lidar}\\
  \tilde{e}_n^i &= e_n^i+\epsilon_{e,n}^{i}, &
  \tilde{\theta}_n^i &= \theta_n^i+\epsilon_{\theta,n}^{i}.
  \label{eq:noise_pose}
\end{align}
where $\epsilon_{r,n}^{i,\ell}\!\sim\!\mathcal{N}(0,\sigma_r^2)$, $\epsilon_{\varphi,n}^{i,\ell}\!\sim\!\mathcal{N}(0,\sigma_\varphi^2)$, $\epsilon_{e,n}^{i}\!\sim\!\mathcal{N}(0,\sigma_e^2 I)$, $\epsilon_{\theta,n}^{i}\!\sim\!\mathcal{N}(0,\sigma_\theta^2)$, and all noise terms are jointly independent. For beam $\ell$, the endpoint is occupied, and points along the ray are free. Using the noisy pose and returns, scan $Z_n^i$ yields labeled samples:
\begin{equation}
\tilde{\mathcal{D}}_n^i = \bigcup_{\ell=1}^{L_n^i}
  \Bigl(\{(\tilde{z}_n^{i,\ell},+1)\}
  \cup\{(\tilde{x}_n^{i,\ell,m},-1)\}_{m=1}^{M_n^{i,\ell}}\Bigr),
\label{eq:scan_to_samples}
\end{equation}
where $\tilde{z}_n^{i,\ell}$ is the noisy beam endpoint and $\tilde{x}_n^{i,\ell,m}$ are free-space samples along the beam. We denote the stochastic process of sampled map locations by $X_n^i\in E$, driven by the robot's noisy pose. For any query $X\in E$, the clean label is $Y:=O(X)\in\{-1,+1\}$, and each sampled location yields an observed noisy label $\tilde{Y}_n^i\in\{-1,+1\}$. The joint distribution of location and noisy label is
\begin{equation}
d\mathbb{P}_{X,\tilde{Y},n}^i(x,\tilde{y})
  = d\mathbb{P}^i\!\bigl(\tilde{Y}_n^i=\tilde{y}\mid X_n^i=x\bigr)\,d\mathbb{P}_{X,n}^i(x),
\label{eq:noisy_label}
\end{equation}
where the conditional measurement distribution is a function of the deterministic $O(x)$. Fix a block half-length $T$.  Partition the time indices into blocks $\mathcal{I}_k := \{(2k-2)T+1,\dots,2kT\}$ for $k=1,2,\dots$, each of length $2T$.  Each block $k$ is split into a training half and a calibration half:
\begin{align*}
\mathcal{I}_k^{\mathrm{train}} &:= \{(2k-2)T+1,\dots,(2k-1)T\},\\
\mathcal{I}_k^{\mathrm{cal}}   &:= \{(2k-1)T+1,\dots,2kT\}.
\end{align*}
Define the training dataset $\mathcal{D}_k^{i,\mathrm{train}}:=\{(X_n^i,\tilde{Y}_n^i)\}_{n\in\mathcal{I}_k^{\mathrm{train}}}$, calibration dataset $\mathcal{D}_k^{i,\mathrm{cal}}:=\{(X_n^i,\tilde{Y}_n^i)\}_{n\in\mathcal{I}_k^{\mathrm{cal}}}$, and training $\sigma$-algebra
$\mathcal{G}_k^i:=\sigma\bigl((X_n^i,\tilde{Y}_n^i):n\in\mathcal{I}_k^{\mathrm{train}}\bigr)$.
Global stationarity is unrealistic in mobile robotics since the robot's spatial sampling distribution evolves as exploration proceeds, and the environment is spatially heterogeneous. The exchangeability assumption in conformal prediction is generally violated; related failures and remedies under distribution shift, covariate shift, and time-series dependence have been studied in \cite{tibshirani2019conformal,gibbs2021aci,zaffran2022adaptive,barber2023conformal}. Following \cite{barber2023conformal,mao2024valid},  we impose local stationarity within each block and temporal mixing across blocks. Distributional closeness is measured in total variation distance $d_{\mathrm{TV}}(P,Q):=\sup_{A}|P(A)-Q(A)|$ \cite{dudley2018real}.

\begin{assumption}[Local Stationarity]\label{ass:stationarity}
For each agent $i$ and block $k$,
\[
\sup_{n,n'\in\mathcal{I}_k^{\mathrm{cal}}}
  d_{\mathrm{TV}}\bigl(\mathcal{L}(X_n^i,\tilde{Y}_n^i),\,
                        \mathcal{L}(X_{n'}^i,\tilde{Y}_{n'}^i)\bigr)\le\delta_T,
\]
where $\delta_T\to 0$ as $T$ grows appropriately.
\end{assumption}

Within each block, the robot explores a locally homogeneous region: occupancy structure, sensor noise, and motion statistics remain approximately constant over the block duration but may change substantially across blocks.

\begin{assumption}[$\beta$-Mixing]\label{ass:mixing}
For each agent $i$, the stochastic process $\{(X_n^i,\tilde{Y}_n^i)\}_{n\ge 1}$ is $\beta^i$-mixing with coefficient $\beta^i(\ell)\to 0$ as $\ell\to\infty$.
\end{assumption}

This captures the decay of temporal dependence due to continuous robot motion. Together, Assumptions~\ref{ass:stationarity} and~\ref{ass:mixing} ensure that the calibration data behaves approximately as an i.i.d.\ sequence within each block.

\subsection{Local Estimation}\label{subsec:local_est}

Given $\mathcal{D}_k^{i,\mathrm{train}}$, agent $i$ fits an ExactGP with a Mat\'{e}rn-$\tfrac{3}{2}$ kernel to obtain a predictive model
\begin{equation}
Y \mid X=x \;\sim\; \mathcal{N}\!\bigl(\hat{\mu}_k^i(x),\,(\hat{\sigma}_k^i)^2(x)\bigr).
\end{equation}

Conditioned on $\mathcal{G}_k^i$, the fitted functions $\hat{\mu}_k^i$ and $\hat{\sigma}_k^i$ are deterministic. For each calibration sample $n\in\mathcal{I}_k^{\mathrm{cal}}$, define the normalized residual nonconformity score
\begin{equation}
A_n^i := \frac{|\tilde{Y}_n^i - \hat{\mu}_k^i(X_n^i)|}{\hat{\sigma}_k^i(X_n^i)}.
\label{eq:nc_score}
\end{equation}
For a query pair $(x^\star,y^\star)$, define $A^i(x^\star,y^\star):=|y^\star-\hat{\mu}_k^i(x^\star)|/\hat{\sigma}_k^i(x^\star)$. The conformal p-value is
\begin{equation}
p_k^i(x^\star,y^\star)
  := \frac{1}{T+1}\!\left(1+\sum_{m\in\mathcal{I}_k^{\mathrm{cal}}}
     \mathbf{1}\{A_m^i\ge A^i(x^\star,y^\star)\}\right).
\label{eq:pval}
\end{equation}
The local prediction set at inner level $a$ is $\Gamma_k^i(x^\star):=\{y:p_k^i(x^\star,y)>a\}$, and $\{p_k^i(X_k^\star,Y_k^\star)\le a\}$ is its miscoverage event; a standard choice is $a=\alpha$, but the fused operator of Section~\ref{subsec:fusion} replaces it with a per-agent recalibrated threshold $a_k^i$. We consider the relevance of agents in two distinct roles. The first is a binary \emph{observation gate} that activates agent $i$ only where it has actually collected data. A natural candidate would be the GP predictive standard deviation $\hat{\sigma}_k^i(x^\star)$, but it is unsuitable for this role as for a stationary Mat\'ern kernel, $\hat{\sigma}_k^i(x)$ saturates at the prior level $\bar{\sigma}^i:=\sqrt{1+(\sigma_n^i)^2}$ (the bare Mat\'ern kernel fixes the signal variance to $1$, and $\sigma_n^i$ is the GP likelihood (observation-noise) standard deviation, fit per agent) as $x$ leaves the explored region, and with a moderate length-scale it can remain well below any fixed threshold over the entire bounded workspace. We instead gate on proximity to the agent's own training inputs,
\begin{equation}
\text{agent $i$ active at $x^\star$}\;\;\Longleftrightarrow\;\;
\operatorname{dist}\!\bigl(x^\star,\{X_n^i\}_{n\in\mathcal{I}_k^{\mathrm{train}}}\bigr)<r, \label{eq:gate}
\end{equation}
where $\operatorname{dist}(x^\star, S)\coloneq\min_{s\in S}\|x^\star-s\|_2$ is the Euclidean distance from $x^\star$ to the nearest point of the set $S$, and $r$ is a radius on the order of the inter-sample spacing. The gate uses the agent's own training inputs $\{X_n^i\}_{n\in\mathcal{I}_k^{\mathrm{train}}}$, so, like $\hat{\mu}_k^i$ and $\hat{\sigma}_k^i$, it is deterministic given $\mathcal{G}_k^i$;  each agent evaluates \eqref{eq:gate} on its own data. The second role is a smooth \emph{attenuation} that down-weights confident-but-uncertain queries; here $\hat{\sigma}_k^i$ is appropriate, and we use the decay \eqref{eq:attenuation}.

\subsection{Problem Definition}\label{Problem Definition}
Let $d_i:=|\mathcal{N}(i)\cup\{i\}|$. We allocate the target miscoverage as a \emph{budget} across the neighborhood,
\begin{equation}
\beta_i := \frac{\alpha}{d_i},
\label{eq:budget}
\end{equation}
and have each contributing agent $j$ recalibrate its detection threshold to that budget. The agents agree on a common set of query locations $\mathcal{Q}_k=\{X^\star_{k,\ell}\}_{\ell=1}^{L_k}$, with $L_k:=|\mathcal{Q}_k|$ (a shared reference grid over the workspace, held fixed across blocks, or a block-specific finite test set), and collaborate by exchanging e-values described as follows. Each element of $\mathcal{Q}_k$ is a realization of the random test location whose block-$k$ law enters the guarantee below; we write $(X_k^\star,Y_k^\star)$ for a generic such location and its true label $Y_k^\star:=O(X_k^\star)$, unobserved at query time. Because $Y_k^\star$ is unknown, each agent evaluates its e-value $e_k^i(X^\star_{k,\ell},y^\star)$ at every query location $X^\star_{k,\ell}\in\mathcal{Q}_k$ and both candidate labels $y^\star\in\{-1,+1\}$. A non-negative scalar $e$ is an e-value if $\mathbb{E}[e]\le 1$; by Markov's inequality, a large e-value is then calibrated evidence against a hypothesis. The use of e-values is motivated by the closure of e-values under deterministic convex averaging: if each local statistic has expectation at most one under the null, then any fixed weighted average remains an e-value \cite{vovk2021evalues,wang2025only}. Recent work on conformal e-prediction shows that conformal prediction sets can be constructed from e-values rather than p-values: for each candidate label \(y\), one computes an e-value ($e(x,y)$) that is valid under the hypothesis \(Y=y\), and then includes \(y\) whenever \(e(x,y)<1/\alpha\),  \cite{gauthier2025values,vovk2025conformal}. 

At the end of block $k$, for every $X^\star_{k,\ell}\in\mathcal{Q}_k$ and each label $y^\star\in\{-1,+1\}$, agent $i$ computes an e-value $e_k^i(X^\star_{k,\ell},y^\star)$ from its local model to measure the evidence against $y^\star$ as follows. Writing $\epsilon_k^i(a):=\mathbb{P}\bigl(p_k^i(X_k^\star,Y_k^\star)\le a\bigr)$ for the local miscoverage of the inner set $\{y:p_k^i(x,y)>a\}$, agent $i$ sets
\begin{equation}
a_k^i := \sup\{a\in(0,1):\epsilon_k^i(a)\le\beta_i\},
\; \epsilon_k^i := \epsilon_k^i(a_k^i)\le\beta_i ,
\label{eq:recalib}
\end{equation}
estimated on a held-out fold (Remark~\ref{rem:epsilon}). For a query $(X^\star_{k,\ell},y^\star)$, agent $i$ defines the lifted violation indicator:
\begin{align}
e_k^i(X^\star_{k,\ell},y^\star)
&:= g_k^i(X^\star_{k,\ell})\,\frac{\mathbf{1}\{p_k^i(X^\star_{k,\ell},y^\star)\le a_k^i\}}{\epsilon_k^i} \nonumber\\
&\quad \cdot \mathbf{1}\{\operatorname{dist}(X^\star_{k,\ell},\{X_n^i\}_{n\in\mathcal{I}_k^{\mathrm{train}}})<r\},
\label{eq:evalue_def}
\end{align}
where the gate $\mathbf{1}\{\operatorname{dist}(X^\star_{k,\ell},\{X_n^i\}_{n\in\mathcal{I}_k^{\mathrm{train}}})<r\}$ of \eqref{eq:gate} activates agent $i$ only where it has observed data: this confines each agent's contribution to the region where its local-coverage bound applies, so it abstains ($e_k^i=0$) rather than casting an uncalibrated vote where it has no data. The factor $g_k^i(X^\star_{k,\ell})\in(0,1]$ is the attenuation
\begin{equation}
g_k^i(X^\star_{k,\ell}) := \exp\!\bigl(-\lambda\, \hat{\sigma}_k^i(X^\star_{k,\ell})\bigr)\in(0,1],\qquad \lambda\ge 0,
\label{eq:attenuation}
\end{equation}
which further down-weights queries where the agent is uncertain, supplying the margin that recovers coverage when the local bound is loose (Proposition~\ref{prop:recovery}); $\lambda=0$ recovers the unattenuated indicator. The attenuation is placed inside the e-value rather than in the mixing weights: since $g_k^i(X^\star_{k,\ell})\le 1$ pointwise, it can only shrink each agent's contribution, so the e-value property $\mathbb{E}[e_k^i]\le 1$ is preserved (see the proof of Theorem~\ref{thm:fusion}). Thus,  the e-value is constructed locally using the conformal p-value \eqref{eq:pval}, an observation gate, an attenuation factor, and a recalibrated threshold. Agent $i$ then broadcasts only these scalars to its neighbors, ensuring no raw data is ever shared.

\begin{problem}\label{prob:main}
Fix the communication graph $\mathcal{G}=(\mathcal{V},\mathcal{E})$ of Section~\ref{Problem Setup}. At the end of block $k$, every agent $j\in\mathcal{V}$ computes its local e-values $e_k^j(X^\star_{k,\ell},y^\star)$, $\ell=1,\dots,L_k$, of \eqref{eq:evalue_def} over the shared query set $\mathcal{Q}_k$ and transmits them to its neighbors $\mathcal{N}(j)$, so that each agent $i$ receives $\{e_k^j\}_{j\in\mathcal{N}(i)\cup\{i\}}$ in one communication round. Let $(X_k^\star,Y_k^\star)$ be a test point whose law is close, in total variation, to the block-$k$ calibration law of every active agent $j\in\mathcal{N}(i)\cup\{i\}$:
\[
\sup_{n\in\mathcal{I}_k^{\mathrm{cal}}} d_{\mathrm{TV}}\bigl(\mathcal{L}(X_n^j,\tilde{Y}_n^j),\,\mathcal{L}(X_k^\star,\tilde{Y}_k^\star)\bigr)\le\delta_T .
\]
The objective is to design, at each agent $i$, a fusion rule that combines the received e-values $\{e_k^j\}_{j\in\mathcal{N}(i)\cup\{i\}}$ into a fused prediction set $\Gamma_k^{\mathrm{fused},i}(X^\star_{k,\ell})$ at each query location, satisfying
\[
\mathbb{P}\!\bigl(Y_k^\star\notin\Gamma_k^{\mathrm{fused},i}(X_k^\star)\bigr)\le\alpha.
\]
\end{problem}

The key difficulty is that temporal correlation in the robot's data violates the i.i.d.\ exchangeability assumption on which split conformal prediction relies, so a single agent can no longer certify that its local miscoverage remains at $\alpha$. Compounding this, any one agent observes only part of the workspace. Fusion over the neighborhood, together with the per-neighborhood budget introduced in Section~\ref{subsec:fusion}, recovers the target level $\alpha$ while letting a lone confident agent still decide a query in its own region, without requiring i.i.d.\ data or any additional structural assumptions.
\section{Algorithm}\label{sec:algorithm}
In this section, we present the fusion operator which solves Problem \ref{prob:main} and then subsequently  an algorithm utilizing the same for occupancy map estimation. 
\subsection{Fusion Operator}\label{subsec:fusion}
Define uniform weights over the neighborhood:
\begin{equation}
w_k^j := \frac{1}{|\mathcal{N}(i)\cup\{i\}|}, \quad j\in\mathcal{N}(i)\cup\{i\},
\label{eq:weights}
\end{equation}
so that $\sum_{j\in\mathcal{N}(i)\cup\{i\}}w_k^j = 1$. Keeping the mixing weights fixed and data-independent is essential for validity: query-dependent convex weights that renormalize over the active set would inflate the contributions of surviving agents and require coverage conditional on the query location, which split conformal prediction does not provide. For each $X^\star_{k,\ell}\in\mathcal{Q}_k$, the aggregated e-value, fused p-value, and fused prediction set are:
\begin{align}
&e_k^{\mathrm{fused},i}(X^\star_{k,\ell},y^\star)
  := \sum_{j\in\mathcal{N}(i)\cup\{i\}} w_k^j\,e_k^j(X^\star_{k,\ell},y^\star),
\label{eq:efused}\\
&p_k^{\mathrm{fused},i}(X^\star_{k,\ell},y^\star)
  := \min\!\left(1,\,\frac{1}{e_k^{\mathrm{fused},i}(X^\star_{k,\ell},y^\star)}\right),
\label{eq:pfused}\\
&\Gamma_k^{\mathrm{fused},i}(X^\star_{k,\ell})
  := \bigl\{y\in\{-1,+1\}:p_k^{\mathrm{fused},i}(X^\star_{k,\ell},y)>\alpha\bigr\}.
\label{eq:gamma_fused}
\end{align}
The maps $e_k^{\mathrm{fused},i}$, $p_k^{\mathrm{fused},i}$, and $\Gamma_k^{\mathrm{fused},i}$ are deterministic given the fitted models $\{\hat{\mu}_k^j,\hat{\sigma}_k^j\}_{j}$; the randomness in the coverage statements of Section~\ref{sec:analysis} arises solely from evaluating them at the random test point $(X_k^\star,Y_k^\star)$.

\begin{remark}
When no agent in $\mathcal{N}(i)\cup\{i\}$ is active at $X^\star_{k,\ell}$ (all $\operatorname{dist}(X^\star_{k,\ell},\{X_n^j\}_{n\in\mathcal{I}_k^{\mathrm{train}}})\ge r$),
every $e_k^j=0$, so $e_k^{\mathrm{fused},i}=0$ and $p_k^{\mathrm{fused},i}=1$. Thus $\Gamma_k^{\mathrm{fused},i}(X^\star_{k,\ell})=\{-1,+1\}$; the prediction set is trivially valid but uninformative. This is the correct conservative behavior when no agent has observed the queried region.
\end{remark}

\subsection{Coverage Recovery via Fusion}

Algorithm~\ref{alg:dcpom} summarizes the full procedure. Phases~1--2 run independently at each agent with no communication beyond agreeing on the query set $\mathcal{Q}_k$. In Phase~2, each agent broadcasts only its scalar e-values over $\mathcal{Q}_k$; no raw data, GP model, or calibration scores leave the agent. Phase~3 fuses the received e-values by a weighted average. The payload is $O(|\mathcal{Q}_k|)$ per agent and sparse, since $e_k^i$ vanishes wherever agent $i$ is inactive or does not flag the candidate label.

The budget \eqref{eq:budget} controls how decisive the fusion is. Because each e-value is normalized by its own miscoverage $\epsilon_k^j\approx\beta_i=\alpha/d_i$, a single active agent that flags the label contributes $e_k^j\approx d_i/\alpha$; after the $1/d_i$ averaging in the fused e-value \eqref{eq:efused}, this reaches the fused decision threshold $1/\alpha$. Without the budget, each agent would instead normalize by $\alpha$, so one flagging agent would contribute only $1/(\alpha d_i)$ after averaging, and a label could be excluded only if several agents flagged it together. The budget removes this factor $d_i$, so an agent can decide queries in the region it has observed on its own.

\begin{algorithm}[t]
\caption{Conformal Coverage Recovery via Fusion}
\label{alg:dcpom}
\begin{algorithmic}[1]
\Require Agents $i=1,\dots,N$; block data $\{(X_n^i,\tilde{Y}_n^i)\}_{n\in\mathcal{I}_k}$;
         $d$-regular graph $\mathcal{G}$ ($d=|\mathcal{N}(i)\cup\{i\}|$); level $\alpha$; gate radius $r$; decay rate $\lambda$; query set $\mathcal{Q}_k$
\Ensure Fused prediction set $\Gamma_k^{\mathrm{fused},i}(X^\star_{k,\ell})$, $X^\star_{k,\ell}\in\mathcal{Q}_k$, at each agent $i$

\Statex \textbf{Phase 1: Local Calibration and Recalibration} (each agent $i$)
\For{each agent $i=1,\dots,N$}
  \State $(\hat{\mu}_k^i,\hat{\sigma}_k^i)\leftarrow\mathrm{TrainGP}(\mathcal{D}_k^{i,\mathrm{train}})$
  \State $A_n^i\leftarrow|\tilde{Y}_n^i-\hat{\mu}_k^i(X_n^i)|/\hat{\sigma}_k^i(X_n^i)$,\;\;$n\in\mathcal{I}_k^{\mathrm{cal}}$
    \Comment{\eqref{eq:nc_score}}
  \State $\beta\leftarrow\alpha/d$;\;\; recalibrate $a_k^i$ and $\epsilon_k^i$ on a held-out fold
    \Comment{\eqref{eq:budget},\eqref{eq:recalib}}
\EndFor

\Statex \textbf{Phase 2: Local E-values and Broadcast} (each agent $i$)
\For{each $X^\star_{k,\ell}\in\mathcal{Q}_k$ and $y^\star\in\{-1,+1\}$}
  \State $A^i\leftarrow|y^\star-\hat{\mu}_k^i(X^\star_{k,\ell})|/\hat{\sigma}_k^i(X^\star_{k,\ell})$;\;\;
         $p_k^i\leftarrow\frac{1}{T+1}\bigl(1+\#\{m:A_m^i\ge A^i\}\bigr)$
    \Comment{\eqref{eq:pval}}
  \State $e_k^i(X^\star_{k,\ell},y^\star)\leftarrow
    \begin{aligned}[t]
    & e^{-\lambda\hat{\sigma}_k^i(X^\star_{k,\ell})}
      \frac{\mathbf{1}\{p_k^i\le a_k^i\}}{\epsilon_k^i} \\
    & \cdot \mathbf{1}\{\operatorname{dist}(X^\star_{k,\ell},\{X_n^i\}_{n\in\mathcal{I}_k^{\mathrm{train}}})<r\}
    \end{aligned}$
    \Comment{\eqref{eq:evalue_def}}
\EndFor
\State Broadcast $\{e_k^i(X^\star_{k,\ell},\cdot)\}_{X^\star_{k,\ell}\in\mathcal{Q}_k}$ to all $j\in\mathcal{N}(i)$

\Statex \textbf{Phase 3: Fusion} (each agent $i$, on received e-values)
\For{each $X^\star_{k,\ell}\in\mathcal{Q}_k$ and $y^\star\in\{-1,+1\}$}
  \State $e_k^{\mathrm{fused},i}\leftarrow
    \tfrac{1}{d}\sum_{j\in\mathcal{N}(i)\cup\{i\}}e_k^j(X^\star_{k,\ell},y^\star)$;\;\;
    $p_k^{\mathrm{fused},i}\leftarrow\min(1,\,1/e_k^{\mathrm{fused},i})$
    \Comment{\eqref{eq:efused},\eqref{eq:pfused}}
\EndFor
\State \Return $\Gamma_k^{\mathrm{fused},i}(X^\star_{k,\ell})\leftarrow
  \{y^\star:p_k^{\mathrm{fused},i}(X^\star_{k,\ell},y^\star)>\alpha\}$
  \Comment{\eqref{eq:gamma_fused}}
\end{algorithmic}
\end{algorithm}

\begin{remark}\label{rem:epsilon}

The threshold $a_k^i$ and its normalizer $\epsilon_k^i$ in \eqref{eq:recalib} are computed from a held-out fold: a labeled set the agent holds that is disjoint from the data used to fit its GP and to compute its calibration scores, and that follows the same local regime as the test point. On this fold, $a_k^i$ is the largest inner level at which the agent's measured miscoverage stays within the budget $\beta_i=\alpha/d_i$, and $\epsilon_k^i$ is the miscoverage measured at $a_k^i$. Keeping this fold disjoint from the calibration scores makes $a_k^i$ and $\epsilon_k^i$ independent of the scores to which they are applied. The choice of this fold in our experiments is the calibration half of the following block, as discussed in Section~\ref{subsec:sim_setup}.
The validity proof (Theorem~\ref{thm:fusion}) needs only one property of $\epsilon_k^i$: that it upper-bounds the true miscoverage $\mathbb{P}(p_k^i(X_k^\star,Y_k^\star)\le a_k^i)$. The held-out estimate provides this under the temporal protocol of Section~\ref{subsec:local_est}. We require $\epsilon_k^i>0$, which holds for finite $T$ because a p-value cannot fall below $1/(T{+}1)$. When the attenuation is tuned as in Proposition~\ref{prop:recovery}, $\epsilon_k^i$ may instead be set to the smaller confident-interior miscoverage $\epsilon_0^i$.

\end{remark}

\section{Analysis}\label{sec:analysis}

We establish that Algorithm~\ref{alg:dcpom} solves Problem~\ref{prob:main}.

\begin{assumption}[Local Coverage]\label{ass:local_coverage}
For each agent $i$, block $k$, and the recalibrated inner threshold $a_k^i$ of \eqref{eq:recalib}, the denominator $\epsilon_k^i\in(0,1)$ upper-bounds the local miscoverage of the inner set $\{y:p_k^i(x,y)>a_k^i\}$: for every test point $(X_k^\star,Y_k^\star)$ whose law satisfies
\begin{align}\label{eq:teststar}
\sup_{n\in\mathcal{I}_k^{\mathrm{cal}}}d_{\mathrm{TV}}\bigl(\mathcal{L}(X_n^i,\tilde{Y}_n^i),\,\mathcal{L}(X_k^\star,\tilde{Y}_k^\star)\bigr)\le\delta_T, \nonumber \\
\implies  \mathbb{P}\!\bigl(p_k^i(X_k^\star,Y_k^\star)\le a_k^i\bigr)\le\epsilon_k^i .
\end{align}
\end{assumption}
The above assumption is in principle the same as the results in \cite{barber2023conformal,mao2024valid}. 
The joint condition \eqref{eq:teststar} refers to the law of $(X_k^\star,\tilde Y_k^\star)$, yet the label is unavailable at query time. The next result shows that, under the shared measurement kernel of our sensing model, it can nonetheless be certified from the spatial marginal $\mathcal{L}(X)$ alone.

\begin{proposition}[Testing the marginal suffices]\label{prop:marginal}
Fix agent $i$ and block $k$, and suppose the noisy label is generated by a common measurement kernel: there is a Markov kernel $K(\cdot\mid x)$ on $\{-1,+1\}$ such that
\begin{align*}
\mathcal{L}(\tilde Y_n^i\mid X_n^i=x)&=K(\cdot\mid x)\;\;\text{for all }n\in\mathcal{I}_k^{\mathrm{cal}}, \\
\mathcal{L}(\tilde Y_k^\star\mid X_k^\star=x)&=K(\cdot\mid x).
\end{align*}

Then for every $n\in\mathcal{I}_k^{\mathrm{cal}}$,
\[
d_{\mathrm{TV}}\bigl(\mathcal{L}(X_n^i,\tilde Y_n^i),\,\mathcal{L}(X_k^\star,\tilde Y_k^\star)\bigr)
= d_{\mathrm{TV}}\bigl(\mathcal{L}(X_n^i),\,\mathcal{L}(X_k^\star)\bigr).
\]
Consequently, the joint closeness \eqref{eq:teststar} holds if and only if the spatial marginals are within $\delta_T$, so any relevance signal that is a function of $x$ alone, such as the observation gate \eqref{eq:gate} or the predictive uncertainty $\hat{\sigma}_k^i(x^\star)$, controls the quantity Assumption~\ref{ass:local_coverage} constrains.
\end{proposition}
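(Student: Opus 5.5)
The plan is to use the disintegration of each joint law into its spatial marginal and the shared measurement kernel, and then compute the total variation of the two joint laws directly. Write $\mu:=\mathcal{L}(X_n^i)$ and $\nu:=\mathcal{L}(X_k^\star)$ for the two spatial marginals on $E$. The hypothesis gives
\[
\mathcal{L}(X_n^i,\tilde Y_n^i)(dx,\{y\})=\mu(dx)\,K(y\mid x),\qquad
\mathcal{L}(X_k^\star,\tilde Y_k^\star)(dx,\{y\})=\nu(dx)\,K(y\mid x).
\]
This is exactly the factorization \eqref{eq:noisy_label}, with the conditional now the same kernel $K$ on both sides.

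Both joint laws are absolutely continuous with respect to $\rho\otimes\#$, where $\rho:=\mu+\nu$ and $\#$ is counting measure on $\{-1,+1\}$. Let $f=d\mu/d\rho$ and $h=d\nu/d\rho$. The densities of the joint laws are then $f(x)K(y\mid x)$ and $h(x)K(y\mid x)$. Using the identity $d_{\mathrm{TV}}(P,Q)=\tfrac12\int|dP/d\lambda-dQ/d\lambda|\,d\lambda$ for any dominating measure $\lambda$, we get
\[
d_{\mathrm{TV}}=\tfrac12\int_E\sum_{y\in\{-1,+1\}}|f(x)-h(x)|\,K(y\mid x)\,\rho(dx).
\]
The $x$-dependent factor $|f-h|$ pulls out of the sum over $y$. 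Since $\sum_y K(y\mid x)=1$, this collapses to $\tfrac12\int_E|f-h|\,d\rho=d_{\mathrm{TV}}(\mu,\nu)$, which is the claimed equality.

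A sup-over-sets argument serves as a cross-check. For the inequality "$\ge$", take $A=B\times\{-1,+1\}$: its probability under each joint law is just the marginal mass of $B$. For "$\le$", any set $A\subseteq E\times\{-1,+1\}$ has $P(A)-Q(A)=\int (f-h)(x)\,K(A_x\mid x)\,d\rho$, where $A_x$ is the section of $A$ at $x$ and $K(A_x\mid x)\in[0,1]$. This quantity is bounded by $\int (f-h)^+d\rho=d_{\mathrm{TV}}(\mu,\nu)$.

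The consequence then follows by taking the supremum over $n\in\mathcal{I}_k^{\mathrm{cal}}$ on both sides of the equality. The resulting supremum over joint laws is at most $\delta_T$ if and only if the supremum over spatial marginals is. Hence the hypothesis of Assumption~\ref{ass:local_coverage} depends only on $\mathcal{L}(X)$, which can be checked through functions of $x$ alone, such as the gate \eqref{eq:gate} or $\hat\sigma_k^i$.

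The only delicate point I expect is measure-theoretic: the kernel must be identical, as a measurable function of $x$, on the union of the supports of $\mu$ and $\nu$. Off the support of a marginal the conditional law is defined only almost everywhere, so the statement's "for all $x$" should be read as holding $\rho$-a.e. Working with the dominating measure $\rho$ handles this cleanly, because the integrand vanishes wherever both densities are zero.
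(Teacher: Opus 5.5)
Your proof is correct, and your main argument takes a different route from the paper's. The paper introduces no densities. It proves the two inequalities separately from the sup-over-sets definition. For ``$\ge$'' it uses that the coordinate projection $(x,y)\mapsto x$ cannot increase total variation. For ``$\le$'' it writes $P(A)-Q(A)=\int K(A_x\mid x)\,(\mu-\nu)(dx)$ and bounds this by $(\mu-\nu)_+(E)=d_{\mathrm{TV}}(\mu,\nu)$ via the Jordan decomposition. That is exactly your ``cross-check''; your cylinder sets $B\times\{-1,+1\}$ are the projection bound written out.

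Your $L^1$ computation over the dominating measure $(\mu+\nu)\otimes\#$ gives equality in one line and shows why it holds. The kernel factor is nonnegative and sums to one in $y$, so it factors out of $|f-h|$ and disappears.

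The paper's route has its own advantages. It stays with the definition of $d_{\mathrm{TV}}$ that the paper actually states, and it needs no Radon--Nikodym step. It also makes visible that only the upper bound uses the shared-kernel hypothesis; the lower bound holds for arbitrary conditionals. Finally, it is the form that generalizes, since both directions are instances of the data-processing inequality: the projection one way, and the Markov kernel $x\mapsto\delta_x\otimes K(\cdot\mid x)$, which pushes $\mu$ to $P$ and $\nu$ to $Q$, the other.

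Your closing remark is a correct refinement that the paper leaves implicit: the common kernel need only be a version of both conditionals, i.e.\ agree $(\mu+\nu)$-a.e.
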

\textit{Proof.}
Write $\mu=\mathcal{L}(X_n^i)$ and $\nu=\mathcal{L}(X_k^\star)$, and let $P,Q$ be the joint laws, which by the shared-kernel hypothesis disintegrate through the same $K$:
$P(dx,dy)=K(dy\mid x)\,\mu(dx)$ and $Q(dx,dy)=K(dy\mid x)\,\nu(dx)$.

\emph{Lower bound.} The coordinate projection $\pi(x,y)=x$ is measurable and pushes $P,Q$ forward to $\mu,\nu$. Since a measurable map cannot increase total variation \cite{dudley2018real},
$d_{\mathrm{TV}}(P,Q)\ge d_{\mathrm{TV}}(\pi_\#P,\pi_\#Q)=d_{\mathrm{TV}}(\mu,\nu)$.

\emph{Upper bound.} Let $\lambda=\mu-\nu$, a finite signed measure with $\lambda(\mathcal{X})=0$. For measurable $A\subseteq\mathcal{X}\times\{-1,+1\}$ with sections $A_x=\{y:(x,y)\in A\}$,
\[
P(A)-Q(A)=\int_{\mathcal{X}} K(A_x\mid x)\,\lambda(dx),
\]
and $x\mapsto K(A_x\mid x)$ is measurable with values in $[0,1]$. Hence
\[
|P(A)-Q(A)|\le\sup_{0\le h\le 1}\int h\,d\lambda=\lambda_+(\mathcal{X})=d_{\mathrm{TV}}(\mu,\nu),
\]
where $\lambda_+(\mathcal{X})=d_{\mathrm{TV}}(\mu,\nu)$ because $\lambda(\mathcal{X})=0$. Taking the supremum over $A$ gives $d_{\mathrm{TV}}(P,Q)\le d_{\mathrm{TV}}(\mu,\nu)$, and the two bounds yield equality.\QEDn\\
The shared-kernel hypothesis holds in our sensing model because the occupancy $O(x)$ is deterministic and the noise parameters in \eqref{eq:noise_lidar}--\eqref{eq:noise_pose} are location- and index-invariant, so $\mathcal{L}(\tilde Y\mid X=x)$ depends on $x$ alone.

\subsection{E-values and the Markov Inequality}

The fused coverage guarantee rests on the theory of e-values \cite{vovk2021evalues}. Recall from Section~\ref{sec:problem formulation} that a non-negative random variable $e$ is an e-value if $\mathbb{E}[e]\le 1$. The following standard property is used by the analysis.

\begin{lemma}[Markov Inequality for E-values]
\label{lem:markov}
If $e$ is an e-value, then for any $\alpha\in(0,1)$,$\mathbb{P}(e\ge 1/\alpha)\le\alpha.$Equivalently, for $p_e:=\min(1,1/e)$, we have $\mathbb{P}(p_e\le\alpha)\le\alpha$.
\end{lemma}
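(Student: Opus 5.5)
The plan is to apply Markov's inequality directly to the non-negative random variable $e$ and then translate the event on $e$ into the corresponding event on $p_e$.

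\emph{First claim.} Since $e\ge 0$, the pointwise inequality $\mathbf{1}\{e\ge 1/\alpha\}\le \alpha e$ holds:
\begin{itemize}
\item on the event $\{e\ge 1/\alpha\}$ the right-hand side is at least $1$;
\item off that event the left-hand side is $0\le \alpha e$.
\end{itemize}
Taking expectations gives $\mathbb{P}(e\ge 1/\alpha)\le \alpha\,\mathbb{E}[e]\le\alpha$, using the e-value property $\mathbb{E}[e]\le 1$. The case $e=+\infty$ has probability zero, since $\mathbb{E}[e]<\infty$, so it causes no difficulty.

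\emph{Equivalence.} I will show that $\{p_e\le\alpha\}=\{e\ge 1/\alpha\}$ as events, adopting the convention $1/0=+\infty$, so that $p_e=1$ when $e=0$.
\begin{itemize}
\item Because $\alpha<1$, the event $p_e\le\alpha$ forces $p_e=1/e<1$, which requires $e>1$ and then $1/e\le\alpha$, i.e.\ $e\ge1/\alpha$.
\item Conversely, $e\ge 1/\alpha>1$ gives $p_e=1/e\le\alpha$.
\end{itemize}
Hence $\mathbb{P}(p_e\le\alpha)=\mathbb{P}(e\ge1/\alpha)\le\alpha$.

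There is no real obstacle. The only care needed is the boundary bookkeeping: the $\min(1,\cdot)$ truncation and the $e=0$ convention. Both are handled by the strict inequality $\alpha<1$, which ensures the truncation at $1$ never interferes with the threshold $\alpha$.
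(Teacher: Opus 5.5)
Your proof is correct and follows the paper's route: Markov's inequality gives $\mathbb{P}(e\ge 1/\alpha)\le\alpha\,\mathbb{E}[e]\le\alpha$, and then the event identity $\{e\ge 1/\alpha\}=\{p_e\le\alpha\}$ transfers the bound to $p_e$. Your handling of the $e=0$ convention and the $\min(1,\cdot)$ truncation, via $\alpha<1$, spells out what the paper's one-line chain $\{e\ge 1/\alpha\}=\{1/e\le\alpha\}=\{p_e\le\alpha\}$ leaves implicit.
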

\textit{Proof.}
Direct application of Markov's inequality:$\mathbb{P}(e\ge 1/\alpha)\le\alpha\,\mathbb{E}[e]\le\alpha.$
The equivalence follows from $\{e\ge 1/\alpha\}=\{1/e\le\alpha\}=\{p_e\le\alpha\}$.\QEDn
\begin{theorem}[Fused Coverage]\label{thm:fusion}
Under Assumption~\ref{ass:local_coverage}, for any agent $i$, block $k$, and test point $(X_k^\star,Y_k^\star)$:
\[
\mathbb{P}\!\bigl(Y_k^\star\notin\Gamma_k^{\mathrm{fused},i}(X_k^\star)\bigr)\le\alpha.
\]
\end{theorem}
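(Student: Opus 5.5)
The argument shows that the fused statistic evaluated at the true label, $e_k^{\mathrm{fused},i}(X_k^\star,Y_k^\star)$, is an e-value. Lemma~\ref{lem:markov} then finishes the proof.

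First, the miscoverage event is rewritten through the fused statistic. By \eqref{eq:gamma_fused} and \eqref{eq:pfused},
\[
\{Y_k^\star\notin\Gamma_k^{\mathrm{fused},i}(X_k^\star)\}=\{p_k^{\mathrm{fused},i}(X_k^\star,Y_k^\star)\le\alpha\}=\{e_k^{\mathrm{fused},i}(X_k^\star,Y_k^\star)\ge 1/\alpha\}.
\]
Hence it suffices to prove $\mathbb{E}[e_k^{\mathrm{fused},i}(X_k^\star,Y_k^\star)]\le 1$ and apply Lemma~\ref{lem:markov}.

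Next, each local e-value is bounded in expectation. Fix $j\in\mathcal{N}(i)\cup\{i\}$. In \eqref{eq:evalue_def}, both the attenuation $g_k^j\in(0,1]$ and the gate indicator are at most one pointwise. This gives
\[
e_k^j(X_k^\star,Y_k^\star)\le \mathbf{1}\{p_k^j(X_k^\star,Y_k^\star)\le a_k^j\}/\epsilon_k^j .
\]
We work conditionally on the training $\sigma$-algebras and the held-out fold. Under this conditioning, $\epsilon_k^j$, $a_k^j$, $\hat\mu_k^j$ and $\hat\sigma_k^j$ are deterministic (Remark~\ref{rem:epsilon}). The remaining randomness is the calibration scores and the test point. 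Assumption~\ref{ass:local_coverage} then gives $\mathbb{P}(p_k^j(X_k^\star,Y_k^\star)\le a_k^j)\le\epsilon_k^j$, so $\mathbb{E}[e_k^j(X_k^\star,Y_k^\star)]\le 1$.

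Finally, the weights $w_k^j=1/d_i$ in \eqref{eq:weights} are fixed and data-independent, so linearity of expectation gives
\[
\mathbb{E}[e_k^{\mathrm{fused},i}]=\sum_j w_k^j\,\mathbb{E}[e_k^j]\le\sum_j w_k^j=1.
\]
This is the convex-averaging closure of e-values cited in Section~\ref{Problem Definition}. It requires no independence across agents, so it holds for any graph and any dependence between neighbors' data. That is why the result is topology- and noise-agnostic.

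The main obstacle is the validity of Assumption~\ref{ass:local_coverage} for each agent $j$ at the given test point. The hypothesis on $(X_k^\star,Y_k^\star)$ must match the total-variation proximity condition \eqref{eq:teststar} for every $j$ in the neighborhood, as stipulated in Problem~\ref{prob:main}.

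There is a subtlety for inactive agents, whose test law may not be $\delta_T$-close. For these I would argue that the gate makes the bound trivial. Any agent whose closeness fails is one whose gate is zero on the relevant region. On that region $e_k^j=0$, and this only lowers the expectation.

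Formally, I would split $\mathbb{E}[e_k^j]$ over the gate event. The inactive part is zero. On the active part, we are in the setting where Assumption~\ref{ass:local_coverage} applies, which holds if that assumption is read as in Problem~\ref{prob:main} for every active agent.

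A second, smaller care point is the conditioning. The bound $\mathbb{E}[\cdot]\le 1$ should be established conditionally on the fitted models and the held-out fold. It then passes to the unconditional statement by the tower property, and the Markov step can be done either conditionally or after integrating.
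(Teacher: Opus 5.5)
Your proposal is correct and follows the paper's proof step for step. You drop $g_k^j\le 1$ and the gate indicator to get $e_k^j\le \mathbf{1}\{p_k^j\le a_k^j\}/\epsilon_k^j$, apply Assumption~\ref{ass:local_coverage} to obtain $\mathbb{E}[e_k^j]\le 1$, average with the fixed weights by linearity, and conclude with Lemma~\ref{lem:markov} via $\{p_k^{\mathrm{fused},i}\le\alpha\}=\{e_k^{\mathrm{fused},i}\ge 1/\alpha\}$.

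Your added care makes explicit what the paper leaves implicit, without changing the argument. Conditioning lets $\epsilon_k^j$ be pulled out of the expectation. Agents whose gate vanishes almost surely under the test law contribute zero. Any agent active with positive probability still needs the full total-variation hypothesis of Assumption~\ref{ass:local_coverage}, so splitting on the gate event gains nothing for those agents.
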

\textit{Proof.}
Fix agent $i$ and block $k$. We show $e_k^{\mathrm{fused},i}$ is an e-value and apply Lemma~\ref{lem:markov}. Note that each $e_k^j$ is an e-value and that is proven as follows. 
For any $j\in\mathcal{N}(i)\cup\{i\}$,
\begin{align*}
\mathbb{E}[e_k^j] \hspace{-3pt}
  &=  \hspace{-2pt}\frac{1}{\epsilon_k^j}\,
     \mathbb{E}\!\bigl[g_k^j(X_k^\star)\,\mathbf{1}\{p_k^j\le a_k^j\}\cdot\mathbf{1}\{\operatorname{dist}(X_k^\star,\{X_n^j\})<r\}\bigr]\\
  &\le \frac{1}{\epsilon_k^j}\,
     \mathbb{E}\!\bigl[\mathbf{1}\{p_k^j\le a_k^j\}\bigr]
   = \frac{1}{\epsilon_k^j}\,
     \mathbb{P}\!\bigl(p_k^j(X_k^\star,Y_k^\star)\le a_k^j\bigr)
   \le 1,
\end{align*}
where the first inequality uses $0\le g_k^j(X_k^\star)\le 1$ and the gate indicator $\le 1$, and the last uses Assumption~\ref{ass:local_coverage}. Hence $\mathbb{E}[e_k^j]\le 1$ for all $j\in\mathcal{N}(i)\cup\{i\}$. By linearity of expectation and $\sum_j w_k^j = 1$:
\[
\mathbb{E}\bigl[e_k^{\mathrm{fused},i}\bigr]
  = \sum_{j\in\mathcal{N}(i)\cup\{i\}} w_k^j\,\mathbb{E}[e_k^j]
  \le \sum_{j\in\mathcal{N}(i)\cup\{i\}} w_k^j = 1,
\]
thus $e_k^{\mathrm{fused},i}$ is an e-value. Invoking Lemma~\ref{lem:markov},  it follows that
\begin{align*}
\mathbb{P}\!\bigl(Y_k^\star\notin\Gamma_k^{\mathrm{fused},i}(X_k^\star)\bigr)
  &= \mathbb{P}\!\bigl(p_k^{\mathrm{fused},i}(X_k^\star,Y_k^\star)\le\alpha\bigr)\\
  &= \mathbb{P}\!\bigl(e_k^{\mathrm{fused},i}(X_k^\star,Y_k^\star)\ge 1/\alpha\bigr)
   \le \alpha.
\end{align*}\QEDn

\subsection{Coverage Recovery via Uncertainty Attenuation}

Theorem~\ref{thm:fusion} assumes the local-coverage bound holds for every active agent. In practice, an agent may be active at $x^\star$ (it has observed data nearby) yet not strictly compliant, so its conditional local miscoverage at $x^\star$ exceeds $\epsilon_k^j$. We now show that the attenuation $g_k^j$ recovers the target level under a graded control on how miscoverage degrades with predictive uncertainty.

\begin{assumption}[Uncertainty-graded miscoverage]\label{ass:profile}
For each agent $i$ and block $k$ there exist $\epsilon_0^i\in(0,1)$ and a growth rate $\mu^i\ge 0$ such that the conditional local miscoverage
\[
c_k^i(x):=\mathbb{P}\!\bigl(p_k^i(X_k^\star,Y_k^\star)\le a_k^i\,\big|\,X_k^\star=x\bigr)
\]
satisfies $c_k^i(x)\le\epsilon_0^i\,\exp\!\bigl(\mu^i\,\hat{\sigma}_k^i(x)\bigr)$ for every active $x$ (i.e.\ $\operatorname{dist}(x,\{X_n^i\}_{n\in\mathcal{I}_k^{\mathrm{train}}})<r$).
\end{assumption}

This posits that conditional miscoverage is at most $\epsilon_0^i$ in the confident interior and degrades at most exponentially as predictive uncertainty grows. It is a \emph{modeling assumption}, not a derived fact: it holds when the GP variance is a faithful proxy for predictive error, and $\mu^i$ quantifies the residual variance miscalibration in the extrapolation regime.

\begin{proposition}[Recovery]\label{prop:recovery}
Set the e-value denominators to $\epsilon_k^j=\epsilon_0^j$ and let $g_k^j(x^\star)=\exp(-\lambda\hat{\sigma}_k^j(x^\star))$ with $\lambda\ge\max_j\mu^j$. Then under Assumption~\ref{ass:profile}, each $e_k^j$ is an e-value and
\[
\mathbb{P}\!\bigl(Y_k^\star\notin\Gamma_k^{\mathrm{fused},i}(X_k^\star)\bigr)\le\alpha,
\]
without requiring the local-coverage bound (Assumption~\ref{ass:local_coverage}) to hold uniformly over active agents.
\end{proposition}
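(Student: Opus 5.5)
The plan is to reproduce the structure of the proof of Theorem~\ref{thm:fusion}, with Assumption~\ref{ass:local_coverage} replaced by a pointwise argument. The idea is to condition on the query location. Assumption~\ref{ass:profile} supplies the needed control through the conditional miscoverage, and the attenuation factor cancels its exponential growth.

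Fix $i$, $k$ and $j\in\mathcal{N}(i)\cup\{i\}$. Since $g_k^j$ and the gate are deterministic functions of $x$ given $\mathcal{G}_k^j$, and $a_k^j$ and $\epsilon_0^j$ are fixed, the tower property gives
\[
\mathbb{E}[e_k^j]=\frac{1}{\epsilon_0^j}\,\mathbb{E}\!\left[g_k^j(X_k^\star)\,\mathbf{1}\{\operatorname{dist}(X_k^\star,\{X_n^j\})<r\}\,c_k^j(X_k^\star)\right].
\]
The gate indicator restricts the integrand to active $x$, which is exactly the set where Assumption~\ref{ass:profile} applies. There $c_k^j(x)\le\epsilon_0^j\exp(\mu^j\hat{\sigma}_k^j(x))$. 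Multiplying by $g_k^j(x)=\exp(-\lambda\hat{\sigma}_k^j(x))$ bounds the integrand by $\epsilon_0^j\exp\bigl((\mu^j-\lambda)\hat{\sigma}_k^j(x)\bigr)$, and this is at most $\epsilon_0^j$ because $\lambda\ge\mu^j$ and $\hat{\sigma}_k^j\ge0$. On inactive $x$ the integrand vanishes. Hence $\mathbb{E}[e_k^j]\le1$. The bound is pointwise in $x$, so it holds for any law of $X_k^\star$. In particular, no uniform local-coverage bound is needed.

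The last step repeats the end of the proof of Theorem~\ref{thm:fusion}. Linearity with the fixed weights $w_k^j$ summing to one makes $e_k^{\mathrm{fused},i}$ an e-value. Lemma~\ref{lem:markov} then gives $\mathbb{P}(e_k^{\mathrm{fused},i}(X_k^\star,Y_k^\star)\ge1/\alpha)\le\alpha$, which is the miscoverage event of \eqref{eq:gamma_fused}.

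The main point to verify is the conditioning step. Given the training $\sigma$-algebras, the only randomness in the integrand must enter through $(X_k^\star,Y_k^\star)$, so that $\mathbb{E}[\,\cdot\mid X_k^\star=x]$ of the indicator is exactly the quantity $c_k^j(x)$ of Assumption~\ref{ass:profile}. This requires the calibration scores and $a_k^j$ to be treated as fixed, or the conditioning to be interpreted in the same sense as in that assumption. I would state explicitly that $c_k^j$ is defined with this same conditioning, so the tower-property identity holds without further argument.
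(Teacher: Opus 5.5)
Your proposal is correct and matches the paper's proof essentially step for step. Both condition on $X_k^\star=x$ with the training data fixed, so the gate and $g_k^j$ are deterministic. Both then use Assumption~\ref{ass:profile} on the gated region to get $g_k^j(x)\,c_k^j(x)\le\epsilon_0^j\exp\bigl((\mu^j-\lambda)\hat{\sigma}_k^j(x)\bigr)\le\epsilon_0^j$, conclude $\mathbb{E}[e_k^j]\le 1$, and finish with linearity of the fixed weights and Lemma~\ref{lem:markov}. You also make explicit two points the paper leaves implicit: the step needs $\hat{\sigma}_k^j\ge 0$, and $c_k^j$ must be defined under the same conditioning used in the tower-property step.
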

\textit{Proof.} Conditioning on $X_k^\star=x$ and the calibration data, the gate and attenuation are deterministic, so
\begin{align*}
\mathbb{E}\!\bigl[e_k^j\mid X_k^\star=x ,& \mathcal{G}^i_k\bigr]
  = \\
  &\frac{g_k^j(x)\,\mathbf{1}\{\operatorname{dist}(x,\{X_n^j\}_{n\in\mathcal{I}_k^{\mathrm{train}}})<r\}}{\epsilon_0^j}\,c_k^j(x).    
\end{align*}
On the active region $\{\operatorname{dist}(x,\{X_n^j\}_{n\in\mathcal{I}_k^{\mathrm{train}}})<r\}$, Assumption~\ref{ass:profile} and $\lambda\ge\mu^j$ give
\[
g_k^j(x)\,c_k^j(x)\le\epsilon_0^j\,\exp\!\bigl((\mu^j-\lambda)\hat{\sigma}_k^j(x)\bigr)\le\epsilon_0^j,
\]
so $\mathbb{E}[e_k^j\mid X_k^\star=x]\le 1$ pointwise, hence $\mathbb{E}[e_k^j]\le 1$. Steps~2--3 of the proof of Theorem~\ref{thm:fusion} then apply verbatim. \QEDn

\begin{remark}[Validity--efficiency dial]\label{rem:dial}
Because $\hat{\sigma}_k^j$ is bounded by the prior level $\bar{\sigma}^j=\sqrt{1+(\sigma_n^j)^2}$, the condition $\lambda\ge\mu^j$ is always met with a finite decay rate; taking $\lambda$ large enough that the floor $\exp(-\lambda\bar{\sigma}^j)\le\epsilon_0^j$ controls even the worst case $c_k^j\equiv 1$. Larger $\lambda$ makes recovery easier but shrinks every $g_k^j$, yielding more conservative, larger prediction sets, so $\lambda$ is the validity--efficiency dial. The guarantee remains marginal: Assumption~\ref{ass:profile} is used only to bound the marginal expectation $\mathbb{E}[e_k^j]$, not to assert conditional coverage of the output set.
\end{remark}

\begin{remark}[Topology and the Budget]\label{rem:topology}
Theorem~\ref{thm:fusion} holds for \emph{any} graph $\mathcal{G}$: the factor $1/d_i$ in $e_k^{\mathrm{fused},i}$ and the $d_i$-fold sum over the neighborhood cancel, so $\mathbb{E}[e_k^{\mathrm{fused},i}]\le 1$ regardless of topology and the coverage guarantee is graph-independent. Topology instead shapes efficiency as a denser graph lets each agent reach the data of more peers, so more of the workspace is covered by at least one active neighbor and the no-data fraction shrinks; since the per-neighborhood budget $\beta_i=\alpha/d_i$ keeps each active agent decisive regardless of $d_i$, a denser graph decides more queries. Section~\ref{sec:simulation} confirms that the mesh attains a far smaller no-data fraction and a higher classified fraction than the ring at equal coverage. The budget $\beta_i=\alpha/d_i$ requires each agent to know the (common) neighborhood size, so the clean broadcast form assumes a regular graph; for an irregular graph, one may use the global bound $\beta=\alpha/\max_i d_i$, preserving validity at a mild efficiency cost.
\end{remark}

\section{Simulation}\label{sec:simulation}

We instantiate Algorithm~\ref{alg:dcpom} on a 2D multi-robot occupancy mapping benchmark to validate Theorem~\ref{thm:fusion} and to quantify the effect of communication topology anticipated in Remark~\ref{rem:topology}. Per that remark, validity holds for any graph $\mathcal{G}$; the simulation isolates the two finite-sample consequences of topology that the bound does not capture: the \emph{efficiency} of the fused prediction set (no-data fraction and mean set size) and the \emph{realized} empirical miscoverage relative to the nominal level~$\alpha$.

\subsection{Setup}\label{subsec:sim_setup}

The workspace is a rectangular region $E=[0,30]\times[0,20]\,$m with interior walls forming corridors and obstacles. $N=5$ ground robots, each equipped with a $180^\circ$ planar LiDAR with range $5$\,m, traverse partially overlapping sub-regions of $E$. Sensor and pose noise follow \eqref{eq:noise_lidar}--\eqref{eq:noise_pose} with $\sigma_r=2\!\times\!10^{-2}$\,m, $\sigma_\varphi=3\!\times\!10^{-3}$\,rad, $\sigma_e=4\!\times\!10^{-2}$\,m, $\sigma_\theta=3\!\times\!10^{-3}$\,rad. Each LiDAR scan is converted to a labeled point cloud via \eqref{eq:scan_to_samples} using $5$ free-space samples per beam. Fig.~\ref{fig:data} shows the five trajectories; each agent covers $52$--$62\%$ of the workspace within LiDAR range, in partially overlapping sub-regions.
\begin{figure}[t]
  \centering
  \includegraphics[width=0.9\columnwidth]{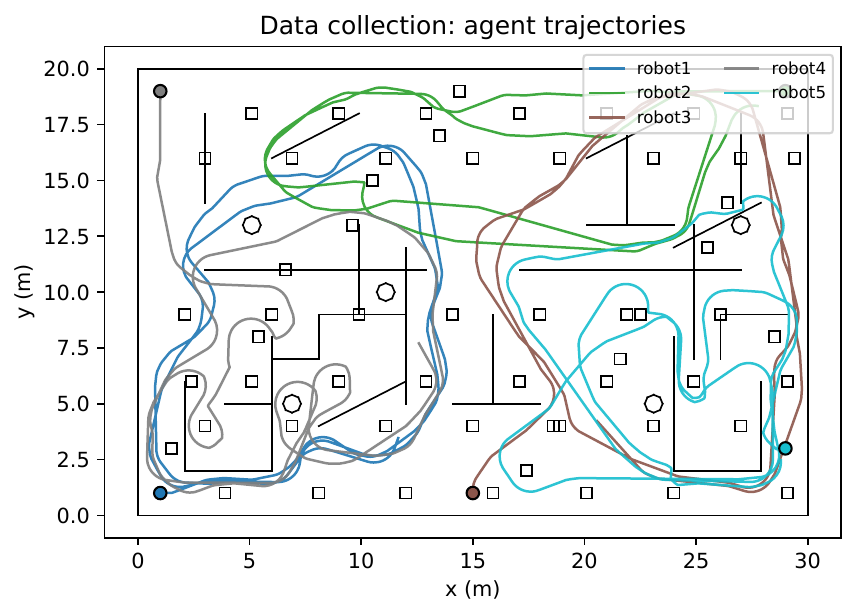}
  \caption{The five agents' trajectories over the workspace $E$ (start markers circled). The agents traverse partially overlapping sub-regions}
  \label{fig:data}
    \vspace{-0.7cm}
\end{figure}
Each agent's trajectory is divided into $K=2$ temporal blocks of equal length, and each block is split into a training half and a calibration half as in Section~\ref{subsec:local_est}. Within each block, agent~$i$ fits an ExactGP with Mat\'ern-$\tfrac{3}{2}$ kernel on up to $1.2\!\times\!10^4$ subsampled labeled points from $\mathcal{D}_k^{i,\mathrm{train}}$ and computes the calibration scores $\{A_n^i\}$ on $\mathcal{D}_k^{i,\mathrm{cal}}$ (capped at $2{,}000$ points). The recalibrated inner threshold $a_k^i$ and denominator $\epsilon_k^i$ of \eqref{eq:recalib} are estimated on a separate held-out fold: the calibration half of the next block, which lies in the same shifted regime as the test set yet is disjoint from it, so the reported coverage is not circular (Remark~\ref{rem:epsilon}). Following the temporal-mixing rationale of Section~\ref{subsec:local_est}, the test set for block $k$ is the training fold of block $k{+}1$, capped at $2{,}000$ points per agent. The union of these per-agent test sets across all $N$ agents forms the shared evaluation set used to compute fused coverage; per-agent coverage is evaluated on each agent's own test fold. The nominal miscoverage level is $\alpha=0.2$. The observation gate \eqref{eq:gate} uses radius $r=1\,$m, on the order of the radial spacing of the LiDAR free-space samples. The per-neighborhood budget is $\beta_i=s\,\alpha/d_i$ with safety factor $s=0.7$, so a lone active agent clears the fused threshold with margin. The attenuation decay is $\lambda=0$ at the operating point and is swept separately to trace the validity--efficiency dial. For the validity study we sweep $\alpha\in\{0.05,0.10,\dots,0.40\}$.

\subsection{Communication Topologies}\label{subsec:topologies}

Ring and mesh topologies have been used in multi-robot occupancy map estimation / SLAM for e.g. \cite{zhang2025flykites,martins2020mrgs}.
\begin{itemize}
\item \textbf{Ring} $\mathcal{G}_{\mathrm{ring}}$: the agents are arranged in a single cycle ordered by spatial adjacency, $1\text{-}2\text{-}3\text{-}5\text{-}4\text{-}1$. Each agent is adjacent to its two cyclic neighbors, so $|\mathcal{N}(i)\cup\{i\}|=3$ for every $i$;
\item \textbf{Mesh} $\mathcal{G}_{\mathrm{mesh}}$: the complete graph $K_N$, so $|\mathcal{N}(i)\cup\{i\}|=N=5$ for every $i$.
\end{itemize}
For each topology and each agent $i$, the fusion of Section~\ref{subsec:fusion} is applied over $\mathcal{N}(i)\cup\{i\}$ on the broadcast e-values, with the budget $\beta_i=s\,\alpha/d_i$ set by the common degree $d_i\in\{3,5\}$. The fused prediction set $\Gamma_k^{\mathrm{fused},i}(x^\star)$ is evaluated at every test point and on a dense grid over $E$ for the map figures.

\subsection{Metrics}\label{subsec:metrics}

We report, per agent, and then averaged across agents:
\begin{itemize}
\item \emph{Coverage} $1-\widehat{m}_i$, the fraction of test points whose true label lies in $\Gamma_k^{\mathrm{fused},i}$;
\item \emph{Classified fraction}, the fraction of singleton sets $|\Gamma|=1$ (a decided free/occupied cell), and \emph{accuracy on classified}, the fraction of those singletons that are correct;
\item \emph{Uncertain fraction}, the fraction of $\Gamma=\{-1,+1\}$ returned \emph{with} data (a genuine conformal abstention), kept separate from the \emph{no-data fraction} $\nu_i$, where no neighbor of $i$ is active and $\Gamma=\{-1,+1\}$ is returned trivially;
\item \emph{Empty fraction}, the fraction of $\Gamma=\varnothing$ (both labels excluded --- the only genuine coverage failure), and the \emph{mean set size} $\bar{s}_i\in[0,2]$.
\end{itemize}
Category fractions are reported both on the whole grid and on the explored region ($\ge 1$ active agent), so trivial no-data sets are never conflated with informative decisions. For the validity study, we plot $1-\widehat{m}_i$ against the nominal coverage $1-\alpha$ for each topology, alongside the ideal diagonal.

\subsection{Implementation Details}\label{subsec:impl}

The pipeline is implemented in Python using GPyTorch for the GPs and a custom NumPy implementation of the conformal, gating, and e-value primitives. Each agent's GP is trained once per block; the distance gate, conformal p-values, recalibration, and e-value fusion are vectorized, so evaluating a second topology on the same broadcast e-values is a single $\mathcal{O}(N\,|\mathcal{Q}_k|)$ pass.

\subsection{Results}\label{subsec:results}

Table~\ref{tab:topology_op} reports the per-agent statistics at $\alpha=0.2$. Both topologies satisfy the fused bound with a large margin: every agent's coverage $1-\widehat{m}_i$ exceeds the target $1-\alpha=0.8$, with agent-averaged values $0.990$ for the ring and $0.976$ for the mesh. The method commits to a singleton only when the e-value evidence is strong, achieving $97$--$99\%$ accuracy on the cells it decides, so realized coverage sits well above the nominal guarantee, and $\alpha$ trades the decided fraction rather than the coverage (Fig.~\ref{fig:coverage_alpha}).As expected, denser communication is better. The mesh decides $84.8\%$ of test queries with only a $1.7\%$ no-data fraction, versus $43.6\%$ decided and $17.4\%$ no-data for the ring, at comparable coverage and accuracy ($0.982$ vs.\ $0.977$). This is due to the per-neighborhood budget $\beta_i=s\alpha/d_i$, as it keeps a lone confident agent decisive even in the mesh's five-agent neighborhood, so the mesh's broader reach reduces no-data without diluting decisions.
\begin{table}[ht]
\centering
\caption{Per-agent fused statistics at $\alpha=0.2$. $1{-}\widehat{m}_i$: coverage; $c_i$: classified (singleton) fraction; $\nu_i$: no-data fraction. Ring permutation is $1{-}2{-}3{-}5{-}4{-}1$, so cyclic neighbors share spatial coverage.}
\label{tab:topology_op}
\renewcommand{\arraystretch}{1.05}
\begin{tabular}{l|ccc|ccc}
\hline
 & \multicolumn{3}{c|}{Ring} & \multicolumn{3}{c}{Mesh} \\
agent & $1{-}\widehat{m}_i$ & $c_i$ & $\nu_i$ & $1{-}\widehat{m}_i$ & $c_i$ & $\nu_i$ \\
\hline
robot1 & 0.988 & 0.490 & 0.200 & 0.976 & 0.848 & 0.017 \\
robot2 & 0.988 & 0.594 & 0.133 & 0.976 & 0.848 & 0.017 \\
robot3 & 0.994 & 0.350 & 0.110 & 0.976 & 0.848 & 0.017 \\
robot4 & 0.984 & 0.587 & 0.205 & 0.976 & 0.848 & 0.017 \\
robot5 & 0.995 & 0.157 & 0.224 & 0.976 & 0.848 & 0.017 \\
\hline
mean   & 0.990 & 0.436 & 0.174 & 0.976 & 0.848 & 0.017 \\
\hline
\end{tabular}
\end{table}

\begin{figure*}[t]
  \centering
  \begin{subfigure}{0.19\textwidth}
    \includegraphics[width=\linewidth]{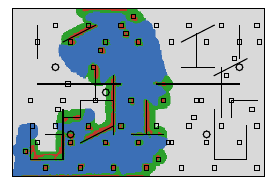}
    \caption{robot1}
  \end{subfigure}\hfill
  \begin{subfigure}{0.19\textwidth}
    \includegraphics[width=\linewidth]{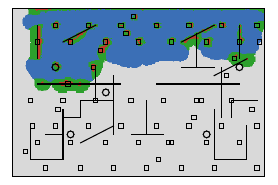}
    \caption{robot2}
  \end{subfigure}\hfill
  \begin{subfigure}{0.19\textwidth}
    \includegraphics[width=\linewidth]{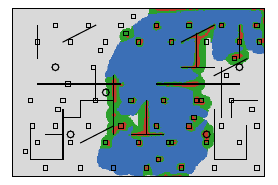}
    \caption{robot3}
  \end{subfigure}\hfill
  \begin{subfigure}{0.19\textwidth}
    \includegraphics[width=\linewidth]{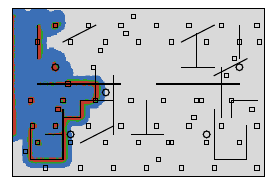}
    \caption{robot4}
  \end{subfigure}\hfill
  \begin{subfigure}{0.19\textwidth}
    \includegraphics[width=\linewidth]{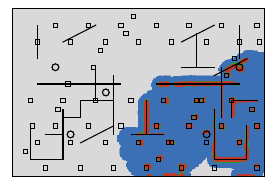}
    \caption{robot5}
  \end{subfigure}

  \medskip
  \begin{subfigure}{0.19\textwidth}
    \includegraphics[width=\linewidth]{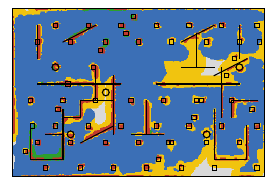}
    \caption{fused (mesh)}
  \end{subfigure}

  \medskip
  \includegraphics[width=0.7\textwidth]{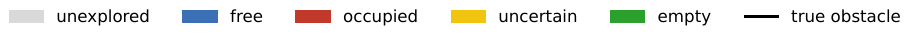}
  \caption{Per-agent local maps (a)--(e), each at the agent's own level $\alpha$, and the fused mesh map (f), at $\alpha=0.2$. Colours: unexplored (gray), free (blue), occupied (red), uncertain $\{-1,+1\}$ with data (yellow), empty $\varnothing$ (green); black lines mark the true obstacles (ground truth). Each agent's standalone conformal map classifies free/occupied within its observed region (gray elsewhere), but leaves empty cells where the single agent under-covers under the temporal shift. Fusion combines the local maps into a near-complete, valid map, with the residual uncertain label sitting mainly along obstacle boundaries.}
  \label{fig:maps}
    \vspace{-0.7cm}
\end{figure*}

Fig.~\ref{fig:maps} shows that each agent's standalone conformal map is gray (no data) outside its own route and, even within it, leaves empty (mis-covered) cells under the temporal shift, whereas the fused mesh map covers nearly the whole workspace and restores valid coverage, leaving $\{-1,+1\}$ mainly along obstacle boundaries. The ring's per-agent variation tracks spatial overlap between cyclic neighbors: robot~1, with overlapping neighbors, decides $49.0\%$ of queries, whereas robot~5, whose cyclic edge to robot~4 spans opposite ends of the workspace, decides only $15.7\%$ and returns the largest sets. The mesh removes this heterogeneity because $\mathcal{N}(i)\cup\{i\}$ is the full agent set for every $i$. Fig.~\ref{fig:coverage_alpha} plots empirical coverage against the nominal $1-\alpha$ over $\alpha\in\{0.05,\dots,0.40\}$. Both curves stay well above the ideal diagonal throughout, confirming Theorem~\ref{thm:fusion}, and remain near $0.98$ across the range. However, it should be noted that cells with a decision are $97$--$99\%$ accurate, and the residual abstention mass (uncertain or no-data sets) counts as covered, so realized coverage stays high even as $\alpha$ grows.
\begin{figure}[ht]
  \centering
  \includegraphics[width=\columnwidth]{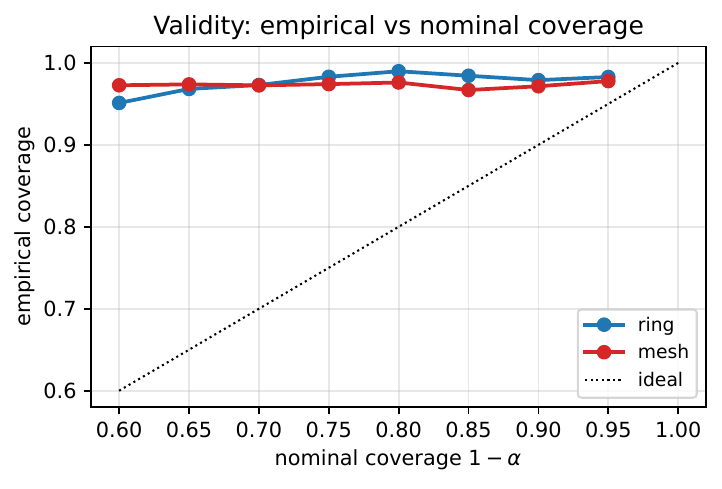}
  \caption{Empirical fused coverage versus nominal coverage $1-\alpha$ for ring and mesh. Both stay above the ideal line at every $\alpha$.}
  \label{fig:coverage_alpha}
    \vspace{-0.6cm}
\end{figure}
\begin{figure}[ht]
  \centering
  \includegraphics[width=\columnwidth]{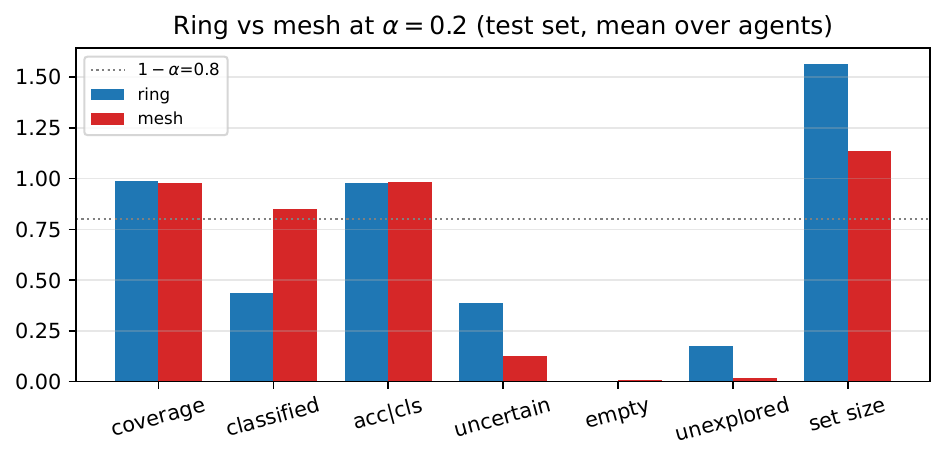}
  \caption{Ring vs.\ mesh at $\alpha=0.2$ (test set, mean over agents). The mesh determines more of the map (higher classification, lower uncertainty, and no-data fraction) at equal coverage and accuracy.}
  \label{fig:topology}
    \vspace{-0.5cm}
\end{figure}
\begin{figure}[ht]
  \centering
  \includegraphics[width=\columnwidth]{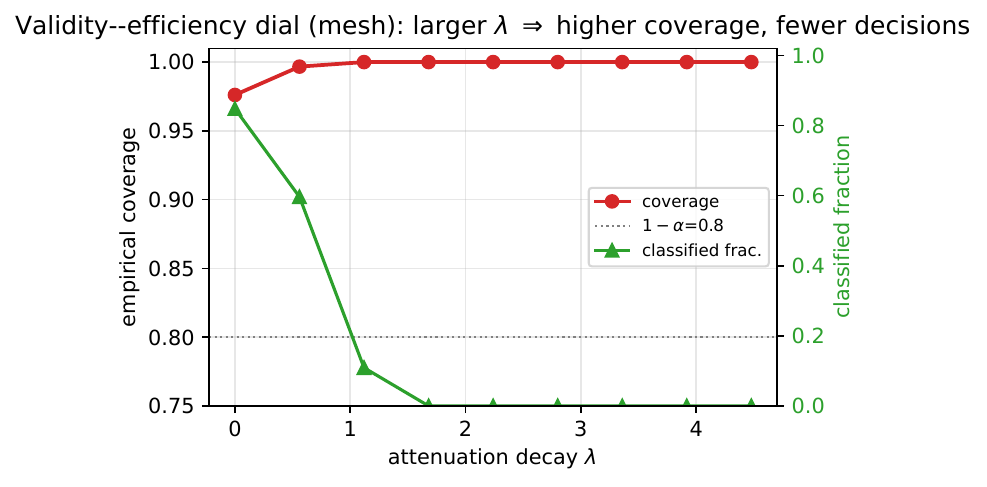}
  \caption{Increasing the attenuation decay $\lambda$ raises coverage toward $1$ while shrinking the classified fraction toward $0$ (sets grow to $\{-1,+1\}$).}
  \label{fig:recovery}
    \vspace{-0.6cm}
\end{figure}
Fig.~\ref{fig:recovery} sweeps the attenuation decay $\lambda$ at the mesh operating point. As $\lambda$ grows, every $g_k^j=e^{-\lambda\hat{\sigma}}$ shrinks, so the decided fraction falls from $0.85$ to $0$ while coverage rises from $0.98$ to $1$ --- $\lambda$ trades informativeness for conservativeness exactly as Remark~\ref{rem:dial} predicts. At $\lambda=0$ the method is already valid, so attenuation is a safety margin rather than a necessity in this benchmark.
The proposed fusion algorithm has three main limitations. First, the guarantee in Theorem~\ref{thm:fusion} is marginal: it ensures coverage on average over the test point and calibration data, but not conditional coverage at a fixed query location \(x^\star\). Such pointwise guarantees in continuous spaces generally require stronger assumptions on the data-generating process or score functions. Second, validity depends on local stationarity (Assumption~\ref{ass:stationarity}) and temporal mixing (Assumption~\ref{ass:stationarity}). In highly dynamic environments, or under erratic robot motion that increases the total-variation drift \(\delta_T\), one must tune the block length \(T\) or introduce larger safety margins. Third, although exchanging scalar e-values is far cheaper than sharing raw point clouds or model parameters, the communication cost still scales with the size of the shared query set \(|\mathcal{Q}_k|\), which may be limiting in large 3D environments.



\section{Conclusion}\label{sec:conclusion}


This paper presented a distributed fusion algorithm for multi-robot environmental mapping that combines Gaussian Process local estimators, conformal prediction, and e-value fusion to recover a user-specified finite-sample coverage level despite spatial and temporal correlations in each robot’s data. By decoupling local model training from distributed fusion, the method enables agents to construct set-valued occupancy maps with formal coverage guarantees, using a per-neighborhood budget and uncertainty attenuation to balance individual decisiveness with network-wide validity. Theoretical analysis shows that coverage holds independently of the communication graph topology, while simulations demonstrate that denser topologies improve the informative fraction of the map, helping bridge scalable continuous mapping with the reliability requirements of safety-critical autonomy.

\bibliographystyle{ieeetr}
\bibliography{biblio}

\end{document}